\documentclass[sigconf,9pt,nonacm]{acmart}

\usepackage{algorithm}
\usepackage[noend]{algpseudocode}
\usepackage{graphicx}
\usepackage{adjustbox}
\usepackage{caption}
\usepackage{textcomp}
\usepackage{amsthm}
\usepackage{verbatim}
\usepackage{subcaption}
\usepackage{multirow}
\usepackage{booktabs}
\usepackage{makecell}

\usepackage{booktabs}
\usepackage[table]{xcolor}
\usepackage{array}
\usepackage{nicematrix}
\usepackage{tikz}
\usepackage{threeparttable}
\usepackage{tablefootnote}
\usepackage{bm}
\usepackage{mathtools}
\usepackage{algorithm}
\usepackage[noend]{algpseudocode}
\usepackage{enumitem}
\usepackage{dblfloatfix}

\def\BibTeX{{\rm B\kern-.05em{\sc i\kern-.025em b}\kern-.08em
    T\kern-.1667em\lower.7ex\hbox{E}\kern-.125emX}}

\newcommand{\WE}{\mathcal{W}_E}
\newcommand{\WV}{\mathcal{W}_V}

\newtheorem{problem}{\textbf{Problem}}

\newtheorem{theorem}{\textbf{Theorem}}
\DeclareMathOperator*{\argmax}{arg\,max}
\DeclareMathOperator{\WaterFill}{WaterFilling}

\newcommand*\circled[1]{\tikz[baseline=(char.base)]{
    \node[shape=circle, fill=black, text=white, inner sep=0.8pt, font=\sffamily\bfseries\small] (char) {#1};}}

\definecolor{ForestGreen}{rgb}{0.13, 0.55, 0.13}

\copyrightyear{2026}
\acmYear{2026}
\setcopyright{cc}
\setcctype{by}
\acmConference[DAC '26]{63rd ACM/IEEE Design Automation Conference}{July 26--29, 2026}{Long Beach, CA, USA}
\acmBooktitle{63rd ACM/IEEE Design Automation Conference (DAC '26), July 26--29, 2026, Long Beach, CA, USA}
\acmDOI{10.1145/3770743.3804131}
\acmISBN{979-8-4007-2254-7/2026/07}

\begin{document}

\fancyhead{}
\renewcommand{\headrulewidth}{0pt}

\title{ComPart: Community-Guided Post-Coarsening for High-Quality Hypergraph Partitioning}

\author{Yugao Zhu,\ Zhicheng Guo, Yuchao Wu,  Mengming Li, Jing Wang, Zhiyao Xie}
\authornote{corresponding author}
\affiliation{%
  \institution{Hong Kong University of Science and Technology}
  \country{\{yzhuel, zguobx, ywu092, mengming.li, jwangjw\}@connect.ust.hk, eezhiyao@ust.hk}
}

\begin{abstract}    
    \label{sec:abs}
    Hypergraph partitioning is a critical step in the design of complex embedded systems, essential for optimizing task mapping on heterogeneous MPSoCs and enabling multi-FPGA prototyping. 
Many existing methods rely on community detection to identify modules with dense internal and sparse external connections, typically utilizing them to constrain the coarsening phase—a widely adopted paradigm. 
In this work, we propose ComPart, a generalized framework that integrates diverse community detection methods to uncover high-quality clusterings throughout the post-coarsening stages (i.e., initial partitioning and uncoarsening).
These discovered clusterings serve as distinct structural guides, enabling the refinement process to identify superior partitioning solutions. 
Our framework offers two key advantages: (1) it establishes a new paradigm that leverages community structures detected during uncoarsening to escape local optima and explore globally meaningful solution subspaces, transcending the limitations of standard local refinements; and (2) it flexibly accommodates both existing and future community detection methods. 
Furthermore, we theoretically generalize locally-dense decomposition—originally from graphs—to the hypergraph domain. 
We provide the formal extension and necessary proofs to apply this technique to hypergraphs, marking its first application in hypergraph partitioning. 
Specifically, we utilize this rigorously derived decomposition to guide the initial partitioning phase toward superior starting points. 
Experimental results on standard benchmarks demonstrate that our method consistently outperforms state-of-the-art methods in solution quality.
\end{abstract}

\maketitle

\section{Introduction}
    \label{sec:intro}
    Hypergraph partitioning serves as a fundamental algorithmic kernel in the design and verification of modern embedded systems. 
It is essential not only for task mapping on heterogeneous MPSoCs during hardware/software co-design\cite{wolf2008multiprocessor,dick1998tgff}, but also for multi-FPGA prototyping~\cite{li2024mapart,tong2024easypart} in logic emulation flows. 
In both scenarios, modeling system components and dependencies as hypergraphs allows for minimizing inter-device communication and satisfying strict resource constraints. 
Over the past decades, a wide range of methods have been developed to address these combinatorial challenges, including spectral approaches~\cite{aghdaei2022hyperef,bustany2022specpart,bustany2023k}, max-flow/min-cut formulations~\cite{gottesburen2019evaluation}, deep learning-based methods~\cite{liang2024medpart,chen2024hypergraph}, and multi-level frameworks.\looseness=-1

Among these methodologies, the multi-level framework has established itself as the dominant paradigm for high-performance hypergraph partitioning, underpinning leading solvers like PaToH~\cite{ccatalyurek2011patoh}, hMETIS~\cite{karypis1997multilevel}, and KaHyPar~\cite{schlag2023high}. This framework typically consists of three key phases:
(1) \textbf{Coarsening}: densely connected vertices are iteratively merged to construct a hierarchy of progressively smaller hypergraphs until the problem size allows for direct partitioning;
(2) \textbf{Initial partitioning}: a initial solution is computed on the coarsest hypergraph;
(3) \textbf{Uncoarsening and refinement}: the solution is projected back to finer levels, where local heuristics such as Fiduccia-Mattheyses (FM)~\cite{fiduccia1988linear} or Kernighan-Lin (KL)~\cite{kernighan1970efficient} are applied to iteratively improve solution quality.

\begin{figure}[!t]
    \centering
    \includegraphics[width=\linewidth]{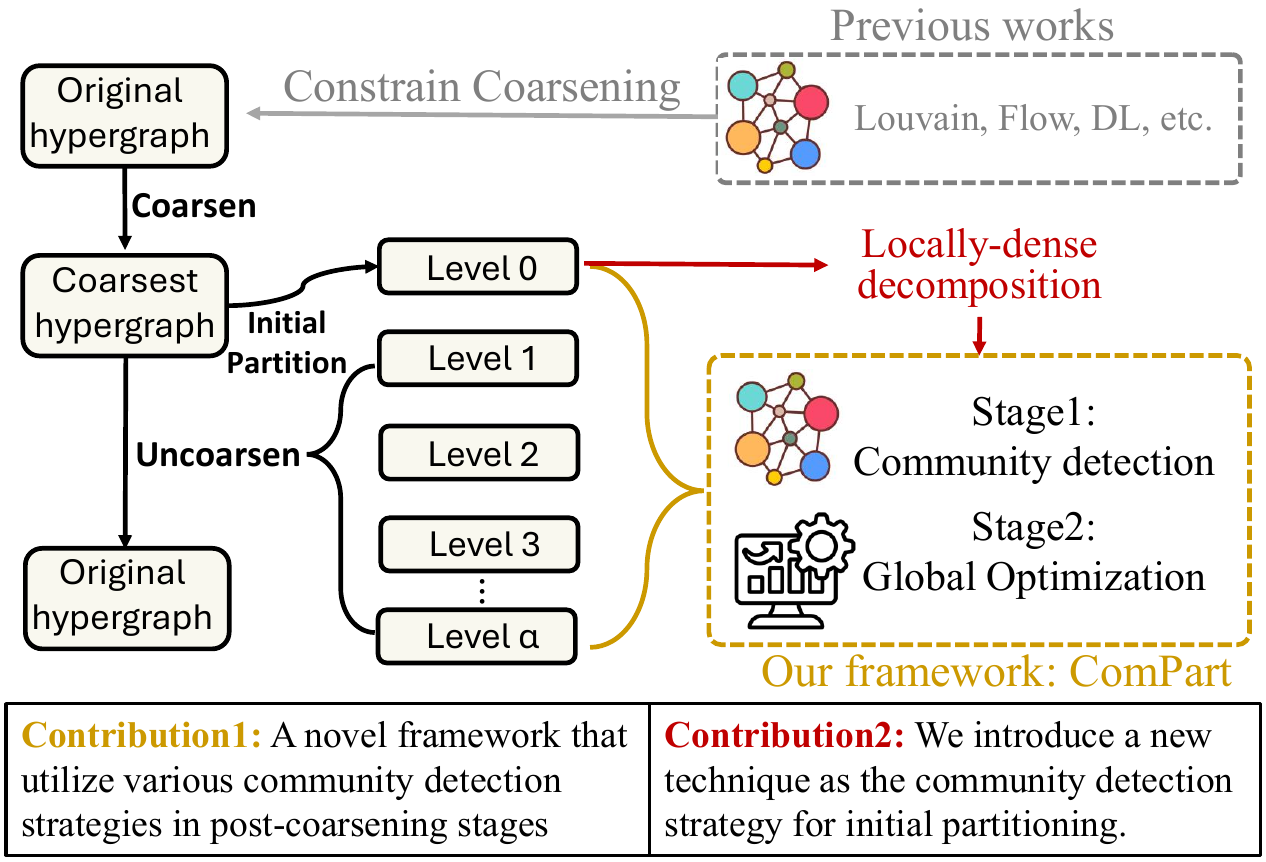}
    \vspace{-.2in}
    \caption{Framework Comparison between our method ComPart and prior works. }
    \label{fig:overview}
    \vspace{-.2in}
\end{figure}

Multi-level frameworks have undergone continuous enhancement, with one of the most prominent directions being the integration of \textit{community detection} strategies. These methods typically introduce community detection \textbf{for coarsening}, where the coarsening process is restricted within each detected community. For example, KaHyPar adopts the Louvain method~\cite{blondel2008fast} to guide the coarsening process, while SHyPar~\cite{sajadinia2025shypar} employs a flow-based community detection approach to optimize hypergraph local conductance (HLS). Both the V-cycle~\cite{karypis1997multilevel} and recombination methods~\cite{andre2018memetic,acikalin2022multilevel,popp2021multilevel} treat candidate partitions as distinct communities, restricting coarsening within their boundaries. Similarly, GenPart~\cite{chen2024hypergraph} utilizes a graph convolutional network (GCN)-based generative model to produce initial solutions before executing the V-cycle.

As illustrated in Fig. \ref{fig:overview}, while almost all prior works concentrate on utilizing community detection to guide the coarsening process, they largely neglect its potential in other stages. In this work, we unprecedentedly target these neglected stages. To this end, we propose \textbf{ComPart}, a novel framework that leverages various community detection methods throughout the \textbf{post-coarsening phases} (i.e., initial partitioning and uncoarsening). This design enables the integration of a wide range of existing community-based techniques—spectral, modularity-based, deep learning, and beyond—thereby generalizing previous coarsening-centric approaches and introducing a new, flexible paradigm centered on the post-coarsening process.

Specifically, this paradigm allows us to transcend the limitations of traditional refinement. Instead of relying solely on vertex-wise moves like FM refinement—which often yields only marginal improvements and becomes trapped in local optima—we introduce a \textit{structure-aware} optimization strategy. We \textbf{periodically} identify community structures within the post-coarsening phases and formulate their reassignment as an integer linear programming (ILP) problem~\cite{heuer2015engineering}. This allows us to optimize the movement of entire community fragments simultaneously. By shifting from a local, node-centric view to a global, community-centric perspective, our approach facilitates large-scale adjustments, effectively escaping the local optima that limit existing frameworks.

In addition, we introduce a theoretically grounded innovation at the coarsest level by generalizing \textit{locally-dense decomposition}~\cite{danisch2017large,zhu2023fast}—originally a graph-theoretic technique—to the hypergraph domain. We provide the formal extension and necessary proofs to adapt this concept to hypergraphs, marking its first application in hypergraph partitioning. 
Crucially, unlike heuristic clustering methods, this decomposition is formulated as a \textbf{convex optimization problem}, guaranteeing convergence to a theoretically optimal solution that reveals hierarchically nested dense substructures. 
Leveraging these mathematically rigorous structural insights, we guide the initial partitioning toward superior starting points.\looseness=-1

The main contributions of this work are summarized as follows: 

\begin{enumerate}
    \item We propose a \emph{generalized framework}, ComPart, that integrates diverse community detection methods into the post-coarsening stages to uncover high-quality clustering structures. Unlike traditional approaches that rely solely on local refinements, our framework leverages these identified communities to guide global solution adjustments, allowing the algorithm to escape local optima. This establishes a novel paradigm that transcends the limitations of local optimization, providing new insights into the effective utilization of community detection in hypergraph partitioning.
    
    \item We generalize the \emph{locally-dense decomposition} technique from the graph to the hypergraph domain. We provide the theoretical extension required to adapt this rigorous method to hyperedges, marking its first application in hypergraph partitioning. By identifying hierarchically nested dense structures within the coarsest hypergraph, this decomposition serves as a principled guide for generating superior initial solutions.
    
    \item We conduct extensive experiments on standard benchmark datasets, comparing our method against both mainstream solvers and recent state-of-the-art approaches. The results demonstrate that our algorithm consistently achieves superior solution quality. Furthermore, comprehensive ablation studies validate the efficacy of each component in our framework. Additionally, we perform a comparison of different community detection strategies to evaluate their specific impact on partitioning performance.
\end{enumerate}

\section{Problem Formulation}
    \label{sec:pro}
    Given a hypergraph $H(V,E)$, where $V$ and $E$ denote the sets of vertices and hyperedges, respectively, each vertex $v \in V$ and hyperedge $e \in E$ is associated with a positive weight $w_v$ and $w_e$. The hypergraph partitioning problem\footnote{Our formulation follows that of established tools such as KaHyPar, PaToH, and the recent hMETIS release (v2.0-pre1).} aims to determine a partition scheme $S$ that divides $V$ into $k$ disjoint blocks $V_1, V_2, \ldots, V_k$, satisfying $V_i \cap V_j = \emptyset$ for all $i \neq j$ and ${\textstyle \bigcup_{i=0}^{k-1}} V_i = V$.
Each block $V_i$ must satisfy the balance constraint
\begin{equation*}
W_{V_i} \le (1+\epsilon)\cdot\left\lceil \frac{W_V}{k} \right\rceil,
\end{equation*}
where $W_V = \sum_{v \in V} w_v$, $W_{V_i} = \sum_{v \in V_i} w_v$, and $\epsilon$ is the imbalance factor.
Under this constraint, the objective is to minimize the cut size of the partition:
\begin{equation*}
    \text{cutsize}_H(S)=\sum_{e\not\subseteq V_i \text{ for all } V_i\in S}{w_e}
\end{equation*}

\section{Methodology}
    \label{sec:methodology}
    In this section, we first present our new framework, ComPart, in which multiple community detection algorithms are integrated into the post-coarsening phases, as detailed in Section~\ref{subsec:framework}. 
Subsequently, in Section~\ref{subsec:ldd}, we formally generalize \textit{locally-dense decomposition} from graphs to hypergraphs, extending both its theoretical properties and algorithmic formulation. 
We leverage this decomposition as a specialized community detection strategy to significantly improve the quality of initial partitioning.

\subsection{Our Framework: ComPart}
\label{subsec:framework}

While prior community-aware frameworks predominantly utilize community structure to restrict coarsening, they significantly underutilize its potential in post-coarsening stages. Consequently, ComPart introduces a fundamental paradigm shift: instead of guiding coarsening, we unprecedentedly target the post-coarsening phases---specifically, initial partitioning and uncoarsening. The overall architecture of ComPart, highlighting this novel approach, is depicted in Figure \ref{fig:framework}.

Our framework is built upon the $n$-level architecture of \texttt{KaHyPar}. Here, $n$ denotes the node count of the original hypergraph: each coarsening step merges one pair of nodes, and each uncoarsening step restores one pair while performing local refinement. During the uncoarsening phase, we perform a two-stage operation at $\alpha$ distinct levels. These levels are triggered whenever the number of nodes reaches a value defined by the set:
\[
\left\{ n_c^{1 - 1/\alpha} n^{1/\alpha},\; n_c^{1 - 2/\alpha} n^{2/\alpha},\; \ldots,\; n_c^{1/\alpha} n^{1 - 1/\alpha},\; n \right\},
\]
where $n_c$ is the node count of the coarsest hypergraph. Intuitively, as the node count increases proportionally, the topological details evolve smoothly across levels, which aligns with the ``Kronecker product'' principle~\cite{graham2018kronecker}; thus, applying community detection at these intervals yields more effective results. The two stages are:
\begin{enumerate}
    \item Applying specified community detection algorithms to the hypergraph at this level;
    \item Considering potential moves for \textit{sub-communities}, i.e., the portions of a community created by cuts.
\end{enumerate}

\begin{figure*}[!t]
    \centering
    \includegraphics[width=0.9\linewidth]{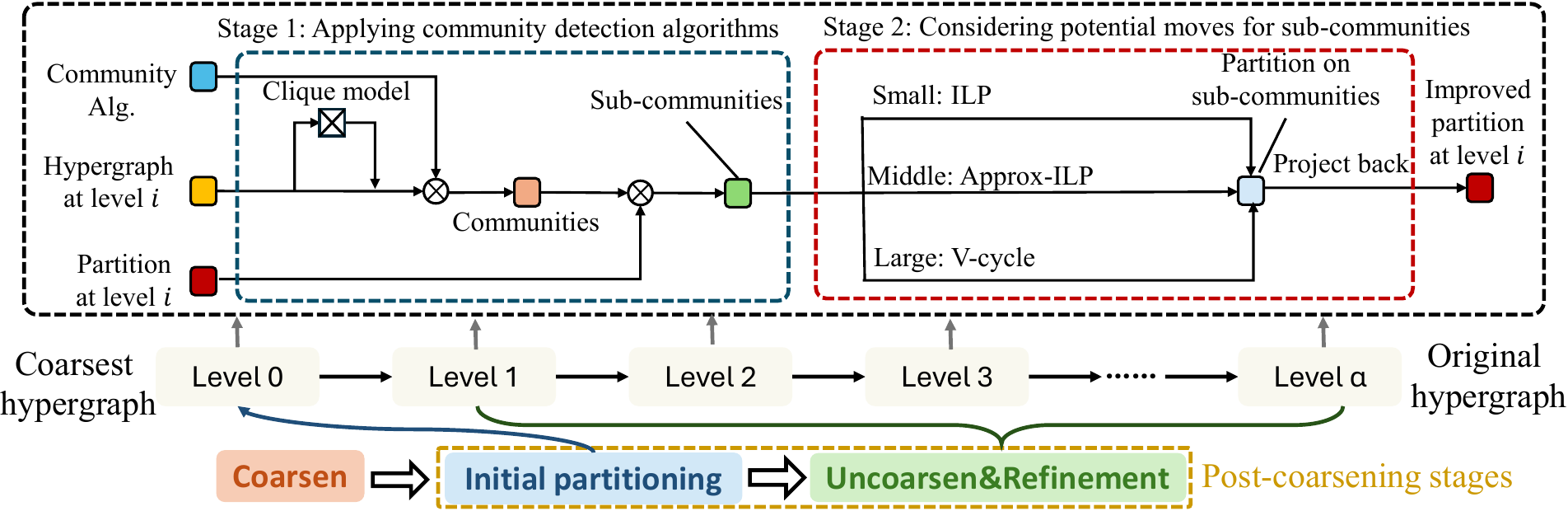}
    \vspace{-.15in}
    \caption{Overview of the proposed ComPart framework. Locally-dense decomposition is used for community detection in the coarsest hypergraph (i.e., level 0).}
    \label{fig:framework}
    \vspace{-.15in}
\end{figure*}

\textbf{Stage \circled{1}: Applying community detection algorithms.}
While some community detection algorithms operate directly on hypergraphs, many are restricted to standard graphs. To leverage these, we convert the hypergraph into a graph using the \textbf{clique model}.\looseness=-1

In this model, each hyperedge $e$ is selectively transformed into a clique by connecting its vertices pairwise with edge weights of $w_e / (|e|-1)$. Specifically, hyperedges with size $n < S_1$ are expanded directly, while those within the range $S_1 \le n < S_2$ are placed into a temporary set $Q$, and larger hyperedges ($n \ge S_2$) are ignored (with $S_1=100$ and $S_2=200$ in our experiments). If the resulting graph is not connected\footnote{Disconnected graphs are incompatible with some methods.}, we iteratively apply clique expansion to the hyperedges stored in the set $Q$, prioritizing those with the smallest number of vertices.
We continue this process until the graph becomes fully connected.
If expanding the hyperedges in $Q$ yields no further reduction in the number of connected components, we introduce edges with minimal weight (e.g., $10^{-6}$) to bridge nodes from different components until the entire graph is connected.
This approach allows us to preserve the structural information of the original hypergraph as much as possible, while ensuring the conversion is completed within a reasonable time complexity.

We apply community detection to the hypergraph or its graph representation using four \texttt{igraph} methods: Label Propagation~\cite{raghavan2007near}, Walktrap~\cite{pons2005computing}, Louvain~\cite{blondel2008fast}, and Fast Greedy~\cite{clauset2004finding}. 
The resulting clusters are categorized as \textit{intra-partition communities} (contained within a single partition) or \textit{edge communities} (spanning multiple partitions). 
Edge communities are intersected with partition boundaries to form smaller segments. 
These segments, together with the intra-partition communities, are treated as candidate \textit{sub-communities} for potential moves in stage 2.

\textbf{Stage \circled{2}: Considering potential moves for sub-communities.}
In this stage, we refine the partition assignment of these sub-communities. Several methods can be employed, such as Integer Linear Programming (ILP), V-cycle, or local heuristic algorithms (e.g., the FM algorithm)\footnote{We employ the first two methods to pursue higher solution quality.}.

The ILP formulation of the hypergraph partitioning problem can be expressed as follows. We define binary decision variables $z_{u,i} \in \{0,1\}$ for each vertex $u$ and partition block $P_i$ ($i \in \{0,\dots,k-1\}$), and $h_{e,i} \in \{0,1\}$ for each hyperedge $e$ and partition block $P_i$. Here, $z_{u,i}=1$ indicates that vertex $u$ is assigned to block $P_i$, while $h_{e,i}=1$ denotes that all vertices of hyperedge $e$ are fully contained within block $P_i$. The constraints are formulated as:

\begin{itemize}
    \item $\displaystyle \sum_{i=0}^{k-1} z_{u,i} = 1$, \quad for all $u \in V$;
    \item $\displaystyle h_{e,i} \leq z_{u,i}$, \quad for all $e \in E$, and for every $u \in e$;
    \item $\displaystyle \sum_{u \in V} w_u z_{u,i} \leq (1+\epsilon)\,\Big\lceil \frac{W_V}{k} \Big\rceil$, \quad for all $i \in \{0,\dots,k-1\}$;
\end{itemize}

The optimization objective is to maximize the total weight of uncut hyperedges:
\[
\text{maximize} \quad \sum_{e \in E} \sum_{i=0}^{k-1} w_e\, h_{e,i}.
\]

Given that ILP inherently exhibits high computational complexity, we adopt a trade-off strategy between runtime and solution quality based on the size of the clustered hypergraph. Specifically, for \textit{small} hypergraphs, we directly solve the ILP using the current partitioning scheme as an initial solution to accelerate convergence; for \textit{medium} hypergraphs, we similarly initialize with the current partitioning but terminate early once the optimality gap drops below 1\%; and for \textit{large} hypergraphs, we substitute the ILP with the V-cycle in KaHyPar to minimize the cut size efficiently.

\subsection{Locally-dense decomposition in initial partitioning}
\label{subsec:ldd}
We introduce the first application of \textit{locally-dense decomposition} to hypergraphs. This adaptation yields a hierarchy of nested, provably denser substructures. Distinct from heuristic approaches, our decomposition is formulated as a \textbf{convex optimization problem with convergence guarantees}, offering a mathematically rigorous foundation for initialization.\looseness=-1

To apply this technique to hypergraph partitioning, we generalize the existing formulation~\cite{zhu2023fast,danisch2017large} in two key aspects: (1) extending the decomposition from \emph{graphs} to \emph{hypergraphs}, and (2) generalizing node weights from \emph{unit weights} to \emph{positive real weights}. We provide rigorous proofs for both generalizations to ensure the theoretical properties are preserved. Given a hypergraph $\mathcal{H}=(\mathcal{V}, \mathcal{E}, \WE, \WV)$, we begin with the following problem definition:\looseness=-1

\begin{problem}[Locally-dense Decomposition Problem on Hypergraph]
    \label{prob:locally-dense}
    Given a hypergraph $\mathcal{H}=(\mathcal{V}, \mathcal{E}, \WE, \WV)$, 
    find its locally-dense decomposition 
    $\emptyset=B_0 \subsetneqq B_1 \subsetneqq \ldots \subsetneqq B_\gamma=\mathcal{V}$, 
    such that $B_i$ is the maximal\footnote{Maximality means that no other set strictly containing $B_{i}$ satisfies the argmax condition.} subgraph strictly containing $B_{i-1}$ that satisfies:
    $ B_i = \argmax_{S \supsetneqq B_{i-1}} \frac{\WE(S) - \WE(B_{i-1})}{\WV(S \setminus B_{i-1})} $
\end{problem}

Here, $\WE(S)$ and $\WE(B_{i-1})$ represent the sum of weights of hyperedges whose endpoints are all contained within the set $S$ and $B_{i-1}$, respectively. $S \setminus B_{i-1}$ denotes the set of vertices in $S$ that are not in $B_{i-1}$, and $\WV(S \setminus B_{i-1})$ represents the sum of weights of these vertices. We introduce a quadratic programming formulation associated with the studied problem. Theorems \ref{prop:uniquity1} and \ref{prop:one-way} characterize their key properties, serving as a basis for Theorem \ref{prop:connection}, which demonstrates the equivalence.

\begin{equation}
    \label{eq:decomposition_hypergraph}
    \centering
    \begin{aligned}
        \textrm{minimize} \qquad  & \sum_{v \in \mathcal{V}}{w_v\ell_v^2} \\
        \textrm{subject to} \qquad & \sum_{v \in e}{f_e(v)} = w_e \qquad \forall e \in \mathcal{E} \\
                                & \ell_v = \frac{\sum_{e \ni v}{f_e(v)}}{w_v}  \qquad \forall v \in \mathcal{V}\\ 
                                & f_e(v) \ge 0 \qquad \forall e \in \mathcal{E}, \forall v \in e \\
    \end{aligned}
\end{equation}

\begin{theorem}
    Given a hypergraph $\mathcal{H}=(\mathcal{V}, \mathcal{E}, \WE, \WV)$, its locally-dense decomposition is unique. 
    \label{prop:uniquity1}
\end{theorem}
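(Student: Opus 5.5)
The plan is to argue by induction on $i$ that each $B_i$ is uniquely determined by $B_{i-1}$. Since $B_0=\emptyset$ is fixed, uniqueness of the whole chain follows. The only potential source of non-uniqueness is the argmax in Problem~\ref{prob:locally-dense}: several sets $S\supsetneqq B_{i-1}$ may attain the maximum ratio. I will therefore show that the family of maximizers is closed under union, so that a unique maximal maximizer exists.

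Fix $B=B_{i-1}$ and set
\[
g(S)=\WE(S)-\WE(B)-\rho^*\,\WV(S\setminus B)
\quad\text{for } S\supseteq B,
\]
where $\rho^*$ is the optimal ratio. The value $\rho^*$ exists and is finite because there are finitely many candidate sets and all node weights are positive, so $\WV(S\setminus B)>0$ whenever $S\supsetneqq B$. By definition $g(S)\le 0$ for every $S\supseteq B$, with $g(B)=0$, and $g(S)=0$ for $S\supsetneqq B$ holds exactly when $S$ is a maximizer.

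The key step is the supermodularity inequality
\[
\WE(S\cup T)+\WE(S\cap T)\ \ge\ \WE(S)+\WE(T).
\]
I will prove it by checking each hyperedge $e$ separately. If $e\subseteq S$ and $e\subseteq T$, then $e\subseteq S\cap T$ and $e\subseteq S\cup T$, so $e$ contributes $2w_e$ to both sides. If $e$ lies in exactly one of $S,T$, it still lies in $S\cup T$, so it contributes at least $w_e$ on the left and exactly $w_e$ on the right. Otherwise it contributes $0$ on the right. This is where the hypergraph generalization enters, and it goes through unchanged because the ``all endpoints contained'' condition is monotone. The vertex-weight term is modular for arbitrary positive weights, since $\WV((S\cup T)\setminus B)+\WV((S\cap T)\setminus B)=\WV(S\setminus B)+\WV(T\setminus B)$. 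Hence $g$ is supermodular on sets containing $B$.

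Now take two maximizers $S$ and $T$. Then
\[
g(S\cup T)+g(S\cap T)\ \ge\ g(S)+g(T)=0 .
\]
Both terms on the left are $\le 0$, so both equal $0$. Since $S\cup T\supsetneqq B$, the set $S\cup T$ is also a maximizer. The intersection may equal $B$, but that case needs no further treatment. So the union of all maximizers is itself a maximizer, and it is the unique maximal one; this gives $B_i$. The chain is strictly increasing and bounded by $\mathcal{V}$, so it terminates at $B_\gamma=\mathcal{V}$ after finitely many steps. Induction on $i$ then yields uniqueness of the decomposition.

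The main obstacle is the bookkeeping in the supermodularity step. I have to make sure that subtracting $\WE(B)$ and restricting to $S\setminus B$ behave correctly when $S\cap T$ collapses to $B$; in that case $g(B)=0$ still holds, so the argument is unaffected. I also have to confirm that the positive real vertex weights only ever enter through a modular term.
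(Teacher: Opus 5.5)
Your proposal is correct and follows essentially the same route as the paper's sketch. Both argue by induction on $i$, and the key fact in both is that the union of two sets attaining the argmax again attains it, which (at the first index where two decompositions differ) contradicts maximality; your per-hyperedge supermodularity of $\WE$ together with the modularity of $\WV$ supplies exactly the derivation the paper omits for brevity.
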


The uniqueness is proved via mathematical induction. Specifically, the union of the smallest differing sets $B^1_i$ and $B^2_i$ from two decompositions satisfies the $\operatorname{argmax}$ condition, which contradicts maximality. The detailed derivation is omitted for brevity.

\begin{theorem}
    For an optimal solution $(\bm{f}^*,\bm{\ell}^*)$ and a hyperedge. For any node $u\in{e}$, if there is a node $v\in{e}$ satisfying ${\ell_{u}}^*>{\ell_{v}}^*$, then ${f_e}^*(u)=0$.
    \label{prop:one-way}
\end{theorem}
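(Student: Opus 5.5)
The argument is an exchange (perturbation) argument by contradiction. Suppose $(\bm{f}^*,\bm{\ell}^*)$ is optimal for~\eqref{eq:decomposition_hypergraph}, $e\in\mathcal{E}$, and $u,v\in e$ satisfy $\ell_u^*>\ell_v^*$ but $f_e^*(u)>0$. We shift a small amount $\delta>0$ of hyperedge $e$'s load from $u$ to $v$:
\[
f_e(u)=f_e^*(u)-\delta,\qquad f_e(v)=f_e^*(v)+\delta,
\]
leaving every other $f_{e'}(x)$ unchanged. This keeps $\sum_{x\in e} f_e(x)=w_e$. Nonnegativity holds as long as $0<\delta\le f_e^*(u)$. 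So the perturbed point is feasible, and only $\ell_u$ and $\ell_v$ change:
\[
\ell_u=\ell_u^*-\frac{\delta}{w_u},\qquad \ell_v=\ell_v^*+\frac{\delta}{w_v}.
\]

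Next we compute the change in the objective. Since $w_x\ell_x^2$ changes only for $x=u,v$,
\[
\Delta(\delta)=w_u\Big(\ell_u^*-\tfrac{\delta}{w_u}\Big)^2-w_u{\ell_u^*}^2+w_v\Big(\ell_v^*+\tfrac{\delta}{w_v}\Big)^2-w_v{\ell_v^*}^2 .
\]
Expanding gives
\[
\Delta(\delta)=-2\delta(\ell_u^*-\ell_v^*)+\delta^2\Big(\tfrac{1}{w_u}+\tfrac{1}{w_v}\Big).
\]
The linear coefficient is strictly negative because $\ell_u^*>\ell_v^*$. Choose
\[
0<\delta<\min\Big\{f_e^*(u),\; \frac{2(\ell_u^*-\ell_v^*)}{1/w_u+1/w_v}\Big\}.
\]
Then $\Delta(\delta)<0$, so the perturbed point is feasible with strictly smaller objective. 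This contradicts optimality, hence $f_e^*(u)=0$.

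Two points need care, and neither is a serious obstacle. First, $u\neq v$ follows automatically from $\ell_u^*>\ell_v^*$, so the two perturbed coordinates are distinct entries of the same hyperedge's flow vector. Second, the positive node weights $w_u,w_v>0$ (the paper's generalization from unit weights) are used only to keep the $\ell$ updates and the quadratic term well defined, so they require no extra argument. The argument needs only feasibility and optimality of the point; it does not require uniqueness of the optimum or Theorem~\ref{prop:uniquity1}.
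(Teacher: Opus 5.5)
Your proposal is correct and is essentially the paper's proof: the same exchange of a small amount of $e$'s load from $u$ to $v$, and your condition $\delta(1/w_u+1/w_v)<2(\ell_u^*-\ell_v^*)$ is exactly the paper's $d(w_u+w_v)<2(w_vf^*(u)-w_uf^*(v))$ after multiplying through by $w_uw_v$. Your explicit upper bound on $\delta$ just replaces the paper's ``let $d\to 0^+$'' step.
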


\begin{proof}[Proof of Theorem 2]
    Define $f^*(u):=\sum_{e \ni u}{f_e(u)}$ and $\ell_u^* = f^*(u) / w_u$.
    Assume $\ell_u^* > \ell_v^*$ and $f_e^*(u) > 0$.
    This implies $w_v f^*(u) > w_u f^*(v)$. Consider shifting $d \in (0, f_e^*(u)]$ from $f_e^*(u)$ to $f_e^*(v)$.
    The new total flows are $f'(u) = f^*(u)-d$ and $f'(v) = f^*(v)+d$.
    We show this shift reduces the objective, which is a contradiction:
    \begin{equation}
    \label{ieq:6}
    \begin{split}
        & w_u(\ell_u')^2 + w_v(\ell_v')^2 < w_u(\ell_u^*)^2 + w_v(\ell_v^*)^2 \\
        % \Leftrightarrow w_u(\frac{f^*(u)-d}{w_u})^2+w_v(\frac{f^*(v)+d}{w_v})^2< w_u(\frac{f^*(u)}{w_u})^2+w_v(\frac{f^*(v)}{w_v})^2 \\
        &\Leftrightarrow \frac{(f^*(u)-d)^2}{w_u}+\frac{(f^*(v)+d)^2}{w_v}<\frac{f^*(u)^2}{w_u}+\frac{f^*(v)^2}{w_v} \\
        &\Leftrightarrow \frac{-2df^*(u) + d^2}{w_u} + \frac{2df^*(v) + d^2}{w_v} < 0 \\
        &\Leftrightarrow d^2(w_u+w_v) + 2d(w_uf^*(v) - w_vf^*(u)) < 0 \\
        &\Leftrightarrow d(w_u+w_v) < 2(w_vf^*(u) - w_uf^*(v))
    \end{split}
    \end{equation}
    Since $w_v f^*(u) > w_u f^*(v)$, the right-hand side (RHS) is a
    positive constant. As $d \to 0^+$, the left-hand side (LHS)
    approaches 0. Thus, a sufficiently small $d>0$ always exists
    that satisfies the inequality. This contradicts the optimality.
\end{proof}

\begin{theorem}
For any node $u \in B_i \setminus B_{i-1}$, let
    \begin{equation*}
    \lambda_u=\lambda_i\coloneqq{\frac{\WE(B_i)-\WE(B_{i-1})}{\WV(B_i \setminus B_{i-1})}},
    \end{equation*}
    the optimal solution $\bm{\ell}^*$ is unique and is given by ${\ell_v}^*=\lambda_v$ for each node $v$. And $\lambda_1>\lambda_2>...>\lambda_\gamma$.
    \label{prop:connection}
\end{theorem}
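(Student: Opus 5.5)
Our plan is to prove three things in turn: that the $\lambda_i$ are strictly decreasing, that every optimal $\bm{\ell}^*$ equals $\bm{\lambda}$, and that $\bm{\ell}^*$ is unique. Uniqueness is almost free. The objective $\sum_v (f(v))^2/w_v$ is strictly convex in the load vector $(f(v))_v$. The feasible set of loads is a convex polytope, being the image of the flow polytope under a linear map. Hence all optimal solutions share the same $\bm{\ell}^*$, even if $\bm{f}^*$ is not unique. It therefore suffices to show that some optimal solution, and hence every one, satisfies $\ell_v^* = \lambda_v$.

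\textbf{Monotonicity.} Suppose $\lambda_{i+1} \ge \lambda_i$. Set $S = B_{i+1}$ as a candidate in the argmax defining $B_i$ from $B_{i-1}$. Its ratio is
\[
\frac{\WE(B_{i+1})-\WE(B_{i-1})}{\WV(B_{i+1}\setminus B_{i-1})},
\]
which is a weighted mediant of $\lambda_i$ and $\lambda_{i+1}$, so it is at least $\lambda_i$. This means $B_{i+1}$ attains the maximum. Since $B_{i+1} \supsetneqq B_i$, this contradicts the maximality of $B_i$. So $\lambda_1 > \lambda_2 > \cdots > \lambda_\gamma$.

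\textbf{Characterizing $\bm{\ell}^*$ by level sets.} Fix an optimal $(\bm{f}^*,\bm{\ell}^*)$. Let $c_1 > c_2 > \cdots > c_m$ be the distinct values of $\bm{\ell}^*$, and let $T_j = \{v : \ell^*_v \ge c_j\}$. By Theorem~\ref{prop:one-way}, a hyperedge $e$ sends flow only to its minimum-$\ell^*$ vertices.
\begin{itemize}
\item A hyperedge contained in $T_j$ sends all of its weight into $T_j$.
\item A hyperedge not contained in $T_j$ has a vertex with $\ell^* < c_j$, so it sends nothing into $T_j$.
\end{itemize}
Hence $\sum_{v\in T_j} w_v\ell_v^* = \WE(T_j)$, and therefore
\[
c_j = \frac{\WE(T_j)-\WE(T_{j-1})}{\WV(T_j\setminus T_{j-1})}.
\]

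\textbf{Showing $T_j = B_j$ and $c_j = \lambda_j$ (main obstacle).} We proceed by induction on $j$. Assume $T_{j-1} = B_{j-1}$ and take any $S \supsetneqq B_{j-1}$. The hyperedges inside $S$ but not inside $B_{j-1}$ route all their weight into $S$, and none of it into $B_{j-1}$, since such a hyperedge is not inside $T_{j-1}$. So
\[
\WE(S)-\WE(B_{j-1}) \le \sum_{v\in S\setminus B_{j-1}} w_v\ell^*_v \le c_j\,\WV(S\setminus B_{j-1}),
\]
using that every vertex outside $T_{j-1}$ has $\ell^* \le c_j$. Equality is attained at $S = T_j$. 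Equality in the second inequality forces $S \setminus B_{j-1} \subseteq T_j$, so $T_j$ is the maximal maximizer.

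By Theorem~\ref{prop:uniquity1}, the decomposition is unique, so $B_j = T_j$ and $\lambda_j = c_j$. In particular $m = \gamma$ and $\ell^*_v = \lambda_v$ for every $v$.

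The delicate points are the bookkeeping of hyperedges that straddle $B_{j-1}$ and $S$, and justifying that at least one optimal solution exists. Existence follows from compactness of the feasible set and continuity of the objective.
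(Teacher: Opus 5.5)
Your proposal is correct and follows essentially the same route as the paper: take the level sets of an optimal $\bm{\ell}^*$, use Theorem~\ref{prop:one-way} to show that a hyperedge not contained in a level set sends it no flow, so that the bound $\WE(S)-\WE(B_{j-1}) \le \sum_{v\in S\setminus B_{j-1}} w_v\ell_v^* \le c_j\,\WV(S\setminus B_{j-1})$ identifies each level set as the maximal maximizer, and then induct. You supply more than the paper does: the explicit induction step with the straddling-hyperedge bookkeeping, a proof that $\lambda_1>\cdots>\lambda_\gamma$ (which the paper leaves implicit), and a strict-convexity argument for uniqueness of $\bm{\ell}^*$; the last two already follow from your level-set identification, which also shows directly that $T_j$ is the unique maximal maximizer, so the appeal to Theorem~\ref{prop:uniquity1} is dispensable.
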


\begin{proof}[Proof of Theorem 3]
    Suppose $\bm{\ell}^*$ is one optimal solution, we sort them in descending order by $\ell^{*}_{u_1}\ge\ell^{*}_{u_2}\ge...\ge\ell^{*}_{u_{|\mathcal{V}|}}$.Suppose the first inequality is the $t$-th inequality, then 
    \[
    \ell^{*}_{u_1}=\ell^{*}_{u_2}=...=\ell^{*}_{u_t} \Rightarrow 
        \frac{f^*(u_1)}{w_{u_1}}=\frac{f^*(u_2)}{w_{u_2}}=...=\frac{f^*(u_t)}{w_{u_t}}
    \]
    Set $S:=\{u_1,u_2,...,u_t\}$, we claim $S=B_1$. Since $B_{0}=\emptyset$,
    \begin{equation} 
    \label{ieq:6}
    \begin{split}
        &\frac{\WE(S)-\WE(B_{0})}{\WV(S)-\WV(B_{0})}=\frac{\WE(S)}{\WV(S)} \\
        &\le \frac{f^*(u_1)+f^*(u_2)+...+f^*(u_t)}{w_{u_1}+w_{u_2}+...+w_{u_t}}=\ell^{*}_{u_1}=\ell^{*}_{u_2}=...=\ell^{*}_{u_t}
    \end{split}
    \end{equation}
    The first inequality holds because, on one hand, $\WV(S) = w_{u_1} + w_{u_2} + \dots + w_{u_t}$, and on the other hand, the weights of all hyperedges with endpoints entirely in $S$ are totally distributed to $f^*(u_1), \dots, f^*(u_t)$, which implies $\WE(S) \le f^*(u_1) + f^*(u_2) + \dots + f^*(u_t)$.
    The inequality is for the general case, but due to Theorem \ref{prop:one-way}, equality holds here. This is because any hyperedge $e$ with endpoints not all in $S$ does not assign a non-zero value to $f^*(u_1), \dots, f^*(u_t)$, i.e., $f_e^*(u) = 0$ for any $u \in \{u_1, \dots, u_t\}$.
    
    Therefore, we assert that $S = B_1$. This is because, by the inequality, no other subgraph (that is not a proper subset of $S$) has a ratio of the sum of edge weights to the sum of vertex weights exceeding
    $$
    \frac{f^*(u_1) + f^*(u_2) + \dots + f^*(u_t)}{w_{u_1} + w_{u_2} + \dots + w_{u_t}}.
    $$
    On the other hand, the equality
    $$
    \frac{\WE(S)}{\WV(S)} = \frac{f^*(u_1) + f^*(u_2) + \dots + f^*(u_t)}{w_{u_1} + w_{u_2} + \dots + w_{u_t}}
    $$
    holds, so $S = B_1$.
    Furthermore, by Theorem \ref{prop:uniquity1}, $B_1$ is unique, so this set $\{u_1, \dots, u_t\}$ is unique and corresponds exactly one-to-one with the nodes of $B_1$.
    For the nodes in $B_2 \setminus B_1, \dots, B_\gamma \setminus B_{\gamma-1}$, we can use mathematical induction to prove the result. Therefore, Theorem \ref{prop:connection} holds.
\end{proof}

\begin{algorithm}[!t]
\caption{BCD for Load Balancing via Water-Filling}
\label{alg:bcd_waterfill}
\begin{algorithmic}[1]
    \State \textbf{Input:} Hypergraph $G=(\mathcal{V}, \mathcal{E})$, vertex weights $\{w_v\}_{v \in \mathcal{V}}$, hyperedge weights $\{w_e\}_{e \in \mathcal{E}}$
    \State \textbf{Output:} The final allocation $f = \{f_e(v) \mid e \in \mathcal{E}, v \in e\}$
    \Statex
    \State \textbf{Initialize:} Create a feasible allocation $f$
    \For{each hyperedge $e \in \mathcal{E}$}
        \State $\WV(e) \gets \sum_{v \in e} w_v$ \Comment{Total vertex weight in $e$}
        \For{each vertex $v \in e$}
            \State $f_e(v) \gets w_e \cdot (w_v / \WV(e))$ \Comment{Proportional allocation}
        \EndFor
    \EndFor
    \Statex
    \State \textbf{repeat}
        \State $f_{\text{prev}} \gets f$ \Comment{Store allocation from previous iteration}
        \For{each hyperedge $e \in \mathcal{E}$ (in a fixed order)}
            \State \Comment{Locally optimize the allocation for hyperedge $e$}
            \State $ \{f_e(v)\}_{v \in e} \gets \WaterFill(e, w_e, \{w_v\}_{v \in e}, \{f_{e'}\}_{e' \ne e}) $
        \EndFor
    \State \textbf{until} convergence \Comment{e.g., $f = f_{\text{prev}}$ or $||f - f_{\text{prev}}|| < \epsilon$}
    \Statex
    \State \textbf{return} $f$
\end{algorithmic}
\end{algorithm}

Theorems~\ref{prop:uniquity1} and~\ref{prop:connection} establish the uniqueness and quadratic formulation of the decomposition. 
Crucially, Theorem~\ref{prop:one-way} implies that optimal allocation occurs only when nodes receiving weight have the minimum load. 
This motivates our Block-Coordinate Descent (BCD) method (Algorithm~\ref{alg:bcd_waterfill}), which employs the \textbf{water-filling} algorithm~\cite{boyd2004convex} to drive each hyperedge to its local optimum. 
Initialized with proportional allocations, the algorithm iteratively updates hyperedge weights via water-filling until convergence. 
Convergence to the global optimum of \eqref{eq:decomposition_hypergraph} is guaranteed by two key properties:
\begin{enumerate}[wide]
    \item The objective function,
    $$ \sum_{v \in \mathcal{V}} w_v \ell_v^2 =
       \sum_{v \in \mathcal{V}} w_v \left(
       \frac{\sum_{e \ni v} f_e(v)}{w_v}
       \right)^2 =
       \sum_{v \in \mathcal{V}} \frac{1}{w_v} \left(
       \sum_{e \ni v} f_e(v)
       \right)^2
    $$
    is continuously differentiable, as it is a quadratic
    polynomial of the variables $f_e(v)$.

   \item The objective is \textbf{blockwise strictly convex}.
    When optimizing for a single hyperedge $e$ (a block),
    the sub-problem's objective (ignoring constants) is:
    $$ \sum_{v \in e} \frac{1}{w_v} (f_e(v) + C_v)^2 \quad
       \text{where } C_v = \sum_{e' \ni v, e' \ne e} f_{e'}(v)
       \text{ is constant.}
    $$
    This function is strictly convex, as its Hessian matrix $H$
    is positive definite:
    $$ H = \text{diag}(2/w_{v_1}, 2/w_{v_2}, \dots, 2/w_{v_k}) $$
\end{enumerate}

As mentioned earlier, the key advantage of this technique lies in its ability to accurately characterize the density hierarchy of the hypergraph, such that subgraphs with different density levels are distinctly separated (assigned to different density layers $B_i$ and $B_j$). Since the coarsening process typically merges locally dense structures into single nodes, the locally-dense decomposition provides a global perspective and complementary guidance for the coarsest hypergraph in the initial partitioning phase. We regard Algorithm \ref{alg:bcd_waterfill} as a community detection method,
and employ the two-stage procedure in Section \ref{subsec:framework}
to improve our initial solution.

\section{experiments}
    \label{sec:exp}
    \begin{table*}[t]
\centering
\caption{Experimental results on Titan23 benchmark.}
\vspace{-.15in}
\label{tab:Titan23}
\begingroup 
\setlength{\tabcolsep}{2.5pt}
\resizebox{0.86\textwidth}{!}{
\begin{NiceTabular}{c||c|c|c|c|c||c|c|c|c||c|c|c|c} \toprule
\multirow{3}{*}{ Design }       &
       \multicolumn{5}{c}{$k=2$} &
       \multicolumn{4}{c}{$k=3$} &
       \multicolumn{4}{c}{$k=4$}
       \\ \cline{2-14}
\textbf{\begin{tabular}[c]{@{}c@{}} \phantom{1} \\ \phantom{2}\end{tabular}}&  \hspace{-1.5pt}\cellcolor[HTML]{F2F2F2}\textbf{hMETIS}\hspace{-1.5pt} &\cellcolor[HTML]{F2F2F2}\textbf{K-Spec}  & \hspace{-1.5pt}\cellcolor[HTML]{F2F2F2}\textbf{KaHyPar}\hspace{-1.5pt}  & \hspace{-4.5pt} \cellcolor[HTML]{F2F2F2}\textbf{SHyPar} \hspace{-4.5pt}   & \hspace{1pt} \cellcolor[HTML]{FFF2CC}\textbf{Ours} \hspace{1pt}   & \hspace{-1.5pt} \cellcolor[HTML]{F2F2F2}\textbf{hMETIS} \hspace{-1.5pt}   & \cellcolor[HTML]{F2F2F2}\textbf{K-Spec}   & \hspace{-4.5pt} \cellcolor[HTML]{F2F2F2}\textbf{KaHyPar} \hspace{-4.5pt}    & \hspace{1pt} \cellcolor[HTML]{FFF2CC}\textbf{Ours} \hspace{1pt}  &   \hspace{-1.5pt} \cellcolor[HTML]{F2F2F2}\textbf{hMETIS}  \hspace{-1.5pt}  & \cellcolor[HTML]{F2F2F2}\textbf{K-Spec}   & \hspace{-4.5pt} \cellcolor[HTML]{F2F2F2}\textbf{KaHyPar} \hspace{-4.5pt}     & \hspace{1pt} \cellcolor[HTML]{FFF2CC}\textbf{Ours} \hspace{1pt} \\
\hline \hline
sparcT1\_core  & 985                                  & 977                                  & {\color[HTML]{00B050} \textbf{974}}  & {\color[HTML]{00B050} \textbf{974}}  & {\color[HTML]{00B050} \textbf{974}}  & 2112 & 1889                                & 1765                                 & {\color[HTML]{00B050} \textbf{1682}} & 2617                                 & 2492                                 & 2591                                 & {\color[HTML]{00B050} \textbf{2111}} \\
neuron        & 284                                  & 244                                  & 245                                  & {\color[HTML]{00B050} \textbf{243}}  & 244                                  & 514  & {\color[HTML]{00B050} \textbf{396}} & 407                                  & 400                                  & 566                                  & 431                                  & 415                                  & {\color[HTML]{00B050} \textbf{384}}  \\
stereo\_vision & 173                                  & {\color[HTML]{00B050} \textbf{169}}  & {\color[HTML]{00B050} \textbf{169}}  & {\color[HTML]{00B050} \textbf{169}}  & {\color[HTML]{00B050} \textbf{169}}  & 342  & 336                                 & 352                                  & {\color[HTML]{00B050} \textbf{322}}  & 435                                  & 475                                  & 392                                  & {\color[HTML]{00B050} \textbf{386}}  \\
des90         & 380                                  & {\color[HTML]{00B050} \textbf{374}}  & 450                                  & 379                                  & 376                                  & 514  & 535                                 & 504                                  & {\color[HTML]{00B050} \textbf{503}}  & 747                                  & 747                                  & 691                                  & {\color[HTML]{00B050} \textbf{660}}  \\
SLAM\_spheric  & {\color[HTML]{00B050} \textbf{1061}} & {\color[HTML]{00B050} \textbf{1061}} & {\color[HTML]{00B050} \textbf{1061}} & {\color[HTML]{00B050} \textbf{1061}} & {\color[HTML]{00B050} \textbf{1061}} & 2854 & 2720                                & {\color[HTML]{00B050} \textbf{2674}} & 2686                                 & 3351                                 & 3241                                 & 3206                                 & {\color[HTML]{00B050} \textbf{3184}} \\
cholesky\_mc   & 287                                  & {\color[HTML]{00B050} \textbf{282}}  & 286                                  & 283                                  & 283                                  & 918  & {\color[HTML]{00B050} \textbf{864}} & 879                                  & 877                                  & 980                                  & 984                                  & 975                                  & {\color[HTML]{00B050} \textbf{975}}  \\
segmentation  & {\color[HTML]{00B050} \textbf{107}}  & 120                                  & {\color[HTML]{00B050} \textbf{107}}  & {\color[HTML]{00B050} \textbf{107}}  & {\color[HTML]{00B050} \textbf{107}}  & 476  & 453                                 & 440                                  & {\color[HTML]{00B050} \textbf{411}}  & 564                                  & 490                                  & 478                                  & {\color[HTML]{00B050} \textbf{477}}  \\
bitonic\_mesh  & 586                                  & 584                                  & 589                                  & 586                                  & {\color[HTML]{00B050} \textbf{583}}  & 927  & 895                                 & 893                                  & {\color[HTML]{00B050} \textbf{890}}  & 1149                                 & 1311                                 & 1102                                 & {\color[HTML]{00B050} \textbf{1097}} \\
dart          & 807                                  & 805                                  & 798                                  & {\color[HTML]{00B050} \textbf{784}}  & {\color[HTML]{00B050} \textbf{784}}  & 1258 & 1243                                & 1166                                 & {\color[HTML]{00B050} \textbf{1053}} & 1499                                 & 1401                                 & 1280                                 & {\color[HTML]{00B050} \textbf{1264}} \\
openCV        & 465                                  & {\color[HTML]{00B050} \textbf{434}}  & 686                                  & 499                                  & {\color[HTML]{00B050} \textbf{434}}  & 522  & 525                                 & 677                                  & {\color[HTML]{00B050} \textbf{489}}  & 581                                  & 522                                  & 693                                  & {\color[HTML]{00B050} \textbf{520}}  \\
stap\_qrd      & 383                                  & 464                                  & {\color[HTML]{00B050} \textbf{371}}  & {\color[HTML]{00B050} \textbf{371}}  & {\color[HTML]{00B050} \textbf{371}}  & 533  & 497                                 & 496                                  & {\color[HTML]{00B050} \textbf{468}}  & 760                                  & 674                                  & 614                                  & {\color[HTML]{00B050} \textbf{611}}  \\
minres        & 211                                  & {\color[HTML]{00B050} \textbf{207}}  & {\color[HTML]{00B050} \textbf{207}}  & {\color[HTML]{00B050} \textbf{207}}  & {\color[HTML]{00B050} \textbf{207}}  & 343  & {\color[HTML]{00B050} \textbf{309}} & {\color[HTML]{00B050} \textbf{309}}  & {\color[HTML]{00B050} \textbf{309}}  & 415                                  & {\color[HTML]{00B050} \textbf{407}}  & 443                                  & 410                                  \\
cholesky\_bdti & 1172                                 & {\color[HTML]{00B050} \textbf{1136}} & 1238                                 & 1156                                 & 1156                                 & 1929 & 1755                                & 1716                                 & {\color[HTML]{00B050} \textbf{1652}} & 1881                                 & {\color[HTML]{00B050} \textbf{1865}} & 1875                                 & 1874                                 \\
denoise       & 456                                  & 418                                  & {\color[HTML]{00B050} \textbf{416}}  & {\color[HTML]{00B050} \textbf{416}}  & {\color[HTML]{00B050} \textbf{416}}  & 925  & 915                                 & 782                                  & {\color[HTML]{00B050} \textbf{721}}  & 1015                                 & 1001                                 & 890                                  & {\color[HTML]{00B050} \textbf{862}}  \\
sparcT2\_core  & 1214                                 & 1188                                 & {\color[HTML]{00B050} \textbf{1183}} & {\color[HTML]{00B050} \textbf{1183}} & {\color[HTML]{00B050} \textbf{1183}} & 3172 & 2249                                & 2057                                 & {\color[HTML]{00B050} \textbf{2032}} & 3088                                 & 3558                                 & {\color[HTML]{00B050} \textbf{2741}} & 2752                                 \\
gsm\_switch    & 4881                                 & 1833                                 & 1651                                 & {\color[HTML]{00B050} \textbf{1621}} & 1645                                 & 5213 & 3694                                & {\color[HTML]{00B050} \textbf{2432}} & 2439                                 & 5379                                 & 4404                                 & {\color[HTML]{00B050} \textbf{2855}} & 2864                                 \\
mes\_noc       & 671                                  & {\color[HTML]{00B050} \textbf{633}}  & 649                                  & 651                                  & 640                                  & 1158 & 1125                                & {\color[HTML]{00B050} \textbf{1110}} & 1123                                 & 1459                                 & 1346                                 & {\color[HTML]{00B050} \textbf{1307}} & {\color[HTML]{00B050} \textbf{1307}} \\
LU230         & 3523                                 & {\color[HTML]{00B050} \textbf{3363}} & 3555                                 & 3602                                 & 3529                                 & 4677 & 4548                                & 5564                                 & {\color[HTML]{00B050} \textbf{5331}} & {\color[HTML]{00B050} \textbf{5748}} & 6310                                 & 5915                                 & 5958                                 \\
LU\_Network    & 524                                  & 524                                  & 524                                  & 524                                  & {\color[HTML]{00B050} \textbf{523}}  & 967  & 882                                 & {\color[HTML]{00B050} \textbf{784}}  & {\color[HTML]{00B050} \textbf{784}}  & 1507                                 & 1417                                 & 1278                                 & {\color[HTML]{00B050} \textbf{1248}} \\
sparcT1\_chip2 & 908                                  & 876                                  & 874                                  & {\color[HTML]{00B050} \textbf{873}}  & 874                                  & 1490 & 1404                                & 1256                                 & {\color[HTML]{00B050} \textbf{1177}} & 2151                                 & 1601                                 & 1595                                 & {\color[HTML]{00B050} \textbf{1431}} \\
directrf      & 538                                  & {\color[HTML]{00B050} \textbf{515}}  & 631                                  & 632                                  & 630                                  & 730  & 762                                 & 1013                                 & {\color[HTML]{00B050} \textbf{710}}  & 1071                                 & 1092                                 & 1069                                 & {\color[HTML]{00B050} \textbf{974}}  \\
bitcoin\_miner & 1574                                 & 1562                                 & 1541                                 & 1514                                 & {\color[HTML]{00B050} \textbf{1489}} & 2119 & 1917                                & 1886                                 & {\color[HTML]{00B050} \textbf{1867}} & 2800                                 & 2737                                 & 1894                                 & {\color[HTML]{00B050} \textbf{1834}}\\ \hline
\textbf{\begin{tabular}[c]{@{}c@{}} Norm \\ Avg.\end{tabular}}    &  \cellcolor[HTML]{F2F2F2}1.035  & \cellcolor[HTML]{F2F2F2}0.975       & \cellcolor[HTML]{F2F2F2}1 & \cellcolor[HTML]{F2F2F2}0.972    & \cellcolor[HTML]{FFF2CC}\textbf{0.964} & \cellcolor[HTML]{F2F2F2}1.102  & \cellcolor[HTML]{F2F2F2}1.016& \cellcolor[HTML]{F2F2F2}1 &  \cellcolor[HTML]{FFF2CC}\textbf{0.943}     & \cellcolor[HTML]{F2F2F2}1.131& \cellcolor[HTML]{F2F2F2}1.080        & \cellcolor[HTML]{F2F2F2}1   & \cellcolor[HTML]{FFF2CC}\textbf{0.955}        \\ \bottomrule

\end{NiceTabular}
}
\endgroup
\par\vspace{2pt}
{\raggedright \footnotesize
$^\star$ The Normalized Average (Norm. Avg.) for each method is calculated as the geometric mean relative to KaHyPar.
\par}
\vspace{-.1in}
\end{table*}

\begin{table*}[!t]
\centering
\caption{Experimental results on ISPD98 benchmark.}
\vspace{-.1in}
\label{tab:ISPD98}
\begingroup 
\setlength{\tabcolsep}{2.5pt}
\resizebox{0.86\textwidth}{!}{

\begin{NiceTabular}{c||c|c|c|c|c||c|c|c|c|c||c|c|c|c|c} \toprule
\multirow{3}{*}{ Design }       &
       \multicolumn{5}{c}{$k=2$} &
       \multicolumn{5}{c}{$k=3$} &
       \multicolumn{5}{c}{$k=4$}
       \\ \cline{2-16}
\textbf{\begin{tabular}[c]{@{}c@{}} \phantom{1} \\ \phantom{2}\end{tabular}}&  \hspace{-1.5pt}\cellcolor[HTML]{F2F2F2}\textbf{hMETIS} \hspace{-1.5pt} &\cellcolor[HTML]{F2F2F2}\textbf{K-Spec}  & \hspace{-1.5pt}\cellcolor[HTML]{F2F2F2}\textbf{KaHyPar}\hspace{-1.5pt}  & \hspace{-4.5pt} \cellcolor[HTML]{F2F2F2}\textbf{SHyPar} \hspace{-4.5pt}   & \hspace{1pt} \cellcolor[HTML]{FFF2CC}\textbf{Ours} \hspace{1pt}   & \hspace{-1.5pt} \cellcolor[HTML]{F2F2F2}\textbf{hMETIS} \hspace{-1.5pt}   & \cellcolor[HTML]{F2F2F2}\textbf{K-Spec}   & \hspace{-4.5pt} \cellcolor[HTML]{F2F2F2}\textbf{KaHyPar} \hspace{-4.5pt}  & \hspace{-4.5pt} \cellcolor[HTML]{F2F2F2}\textbf{SHyPar} \hspace{-4.5pt}  & \hspace{1pt} \cellcolor[HTML]{FFF2CC}\textbf{Ours} \hspace{1pt}  &  \hspace{-1.5pt}\cellcolor[HTML]{F2F2F2}\textbf{hMETIS}  \hspace{-1.5pt}  & \cellcolor[HTML]{F2F2F2}\textbf{K-Spec}   & \hspace{-4.5pt} \cellcolor[HTML]{F2F2F2}\textbf{KaHyPar} \hspace{-4.5pt}  & \hspace{-4.5pt} \cellcolor[HTML]{F2F2F2}\textbf{SHyPar} \hspace{-4.5pt}   & \hspace{1pt} \cellcolor[HTML]{FFF2CC}\textbf{Ours} \hspace{1pt} \\
\hline \hline
ibm01 & 205  & 203                                  & 202                                  & {\color[HTML]{00B050} \textbf{201}}  & 202                                  & 352  & 352                                 & {\color[HTML]{00B050} \textbf{335}}  & 338  & {\color[HTML]{00B050} \textbf{335}}  & 485  & 522                                  & 463                                  & 466                                 & {\color[HTML]{00B050} \textbf{457}}  \\
ibm02 & 345  & 333                                  & 329                                  & {\color[HTML]{00B050} \textbf{327}}  & {\color[HTML]{00B050} \textbf{327}}  & 340  & {\color[HTML]{00B050} \textbf{339}} & 340                                  & 340  & {\color[HTML]{00B050} \textbf{339}}  & 609  & 706                                  & 642                                  & {\color[HTML]{00B050} \textbf{587}} & 602                                  \\
ibm03 & 961  & 957                                  & 957                                  & {\color[HTML]{00B050} \textbf{952}}  & {\color[HTML]{00B050} \textbf{952}}  & 1483 & 1480                                & 1627                                 & 1549 & {\color[HTML]{00B050} \textbf{1434}} & 1673 & 1690                                 & {\color[HTML]{00B050} \textbf{1650}} & 1718                                & 1655                                 \\
ibm04 & 581  & 580                                  & 580                                  & {\color[HTML]{00B050} \textbf{579}}  & 580                                  & 1176 & 1212                                & 1160                                 & 1146 & {\color[HTML]{00B050} \textbf{1135}} & 1638 & 1626                                 & 1590                                 & 1683                                & {\color[HTML]{00B050} \textbf{1589}} \\
ibm05 & 1728 & 1716                                 & {\color[HTML]{00B050} \textbf{1706}} & 1707                                 & {\color[HTML]{00B050} \textbf{1706}} & 2666 & 2635                                & 2749                                 & 2764 & {\color[HTML]{00B050} \textbf{2619}} & 2950 & {\color[HTML]{00B050} \textbf{2946}} & 2956                                 & 3097                                & 2956                                 \\
ibm06 & 987  & 976                                  & 980                                  & 969                                  & {\color[HTML]{00B050} \textbf{963}}  & 1293 & 1305                                & 1297                                 & 1323 & {\color[HTML]{00B050} \textbf{1280}} & 1493 & 1476                                 & {\color[HTML]{00B050} \textbf{1470}} & 1522                                & {\color[HTML]{00B050} \textbf{1470}} \\
ibm07 & 926  & 935                                  & 900                                  & {\color[HTML]{00B050} \textbf{882}}  & 888                                  & 1852 & 1846                                & 1774                                 & 1722 & {\color[HTML]{00B050} \textbf{1697}} & 2206 & 2154                                 & 2066                                 & 2076                                & {\color[HTML]{00B050} \textbf{2062}} \\
ibm08 & 1147 & {\color[HTML]{00B050} \textbf{1140}} & {\color[HTML]{00B050} \textbf{1140}} & {\color[HTML]{00B050} \textbf{1140}} & {\color[HTML]{00B050} \textbf{1140}} & 2000 & 2037                                & 2050                                 & 1944 & {\color[HTML]{00B050} \textbf{1899}} & 2332 & 2328                                 & 2291                                 & 2340                                & {\color[HTML]{00B050} \textbf{2246}} \\
ibm09 & 628  & {\color[HTML]{00B050} \textbf{620}}  & {\color[HTML]{00B050} \textbf{620}}  & {\color[HTML]{00B050} \textbf{620}}  & {\color[HTML]{00B050} \textbf{620}}  & 1395 & 1384                                & 1427                                 & 1357 & {\color[HTML]{00B050} \textbf{1322}} & 1700 & 1676                                 & {\color[HTML]{00B050} \textbf{1666}} & 1701                                & 1667                                 \\
ibm10 & 1335 & 1257                                 & 1313                                 & {\color[HTML]{00B050} \textbf{1254}} & {\color[HTML]{00B050} \textbf{1254}} & 1924 & 1880                                & 1873                                 & 1913 & {\color[HTML]{00B050} \textbf{1872}} & 2444 & 2400                                 & 2190                                 & 2234                                & {\color[HTML]{00B050} \textbf{2187}} \\
ibm11 & 1062 & {\color[HTML]{00B050} \textbf{1051}} & 1062                                 & {\color[HTML]{00B050} \textbf{1051}} & 1062                                 & 1810 & 1843                                & 1777                                 & 1784 & {\color[HTML]{00B050} \textbf{1739}} & 2432 & 2452                                 & 2383                                 & 2472                                & {\color[HTML]{00B050} \textbf{2346}} \\
ibm12 & 1952 & 1937                                 & {\color[HTML]{00B050} \textbf{1920}} & 1986                                 & {\color[HTML]{00B050} \textbf{1920}} & 2874 & 2791                                & 2778                                 & 2778 & {\color[HTML]{00B050} \textbf{2718}} & 3893 & 3844                                 & {\color[HTML]{00B050} \textbf{3774}} & 3802                                & 3776                                 \\
ibm13 & 833  & 832                                  & 848                                  & {\color[HTML]{00B050} \textbf{831}}  & {\color[HTML]{00B050} \textbf{831}}  & 1390 & 1335                                & 1386                                 & 1312 & {\color[HTML]{00B050} \textbf{1272}} & 1865 & 1904                                 & {\color[HTML]{00B050} \textbf{1715}} & 1736                                & {\color[HTML]{00B050} \textbf{1715}} \\
ibm14 & 1892 & 1850                                 & 1849                                 & {\color[HTML]{00B050} \textbf{1842}} & 1849                                 & 2737 & 2710                                & 2606                                 & 2691 & {\color[HTML]{00B050} \textbf{2603}} & 3283 & 3475                                 & {\color[HTML]{00B050} \textbf{3192}} & 3230                                & 3195                                 \\
ibm15 & 2751 & 2741                                 & {\color[HTML]{00B050} \textbf{2728}} & {\color[HTML]{00B050} \textbf{2728}} & {\color[HTML]{00B050} \textbf{2728}} & 4121 & 4333                                & 4132                                 & 4140 & {\color[HTML]{00B050} \textbf{4008}} & 4825 & 4720                                 & 4724                                 & 4809                                & {\color[HTML]{00B050} \textbf{4584}} \\
ibm16 & 1960 & 1921                                 & 1929                                 & 1887                                 & {\color[HTML]{00B050} \textbf{1855}} & 3120 & 3062                                & 3477                                 & 3043 & {\color[HTML]{00B050} \textbf{2890}} & 4182 & 4060                                 & 3805                                 & 3802                                & {\color[HTML]{00B050} \textbf{3757}} \\
ibm17 & 2336 & 2307                                 & 2294                                 & {\color[HTML]{00B050} \textbf{2285}} & {\color[HTML]{00B050} \textbf{2285}} & 4352 & 4248                                & {\color[HTML]{00B050} \textbf{4053}} & 4095 & 4054                                 & 5708 & 5583                                 & 5494                                 & 5386                                & {\color[HTML]{00B050} \textbf{5213}} \\
ibm18 & 1827 & 1523                                 & 1915                                 & {\color[HTML]{00B050} \textbf{1521}} & {\color[HTML]{00B050} \textbf{1521}} & 2706 & 2401                                & {\color[HTML]{00B050} \textbf{2331}} & 2361 & 2334                                 & 3262 & 2918                                 & {\color[HTML]{00B050} \textbf{2822}} & 2993                                & {\color[HTML]{00B050} \textbf{2822}}           \\ \hline
\textbf{\begin{tabular}[c]{@{}c@{}} Norm \\ Avg.\end{tabular}}    &  \cellcolor[HTML]{F2F2F2}1.009  & \cellcolor[HTML]{F2F2F2}0.987       & \cellcolor[HTML]{F2F2F2}1 & \cellcolor[HTML]{F2F2F2}0.981    & \cellcolor[HTML]{FFF2CC}\textbf{0.979} & \cellcolor[HTML]{F2F2F2}1.010  & \cellcolor[HTML]{F2F2F2}0.994& \cellcolor[HTML]{F2F2F2}1 & \cellcolor[HTML]{F2F2F2}0.985 & \cellcolor[HTML]{FFF2CC}\textbf{0.959}     & \cellcolor[HTML]{F2F2F2}1.041& \cellcolor[HTML]{F2F2F2}1.058        & \cellcolor[HTML]{F2F2F2}1 & \cellcolor[HTML]{F2F2F2}1.016   & \cellcolor[HTML]{FFF2CC}\textbf{0.989}        \\ \bottomrule

\end{NiceTabular}
}
\endgroup
\vspace{-.1in}
\end{table*}

The structure of this chapter is organized as follows. In Section~\ref{subsec:setup}, we introduce the experimental setup. In Section~\ref{subsec:performance}, we compare our algorithm with the leading methods, including \textit{hMETIS}, \textit{KaHyPar}, \textit{SpecPart}, and \textit{SHyPar}, on both the Titan23 and ISPD98 benchmarks. %In Subsection~\ref{subsec:ablation}, we present the ablation study, which analyzes how various community detection methods in the uncoarsening phase and the locally-dense decomposition in the initial partitioning phase contribute to performance improvement. By comparing different community detection methods, we summarize general principles for selecting suitable methods across different benchmarks.

\subsection{Experimental setup}
\label{subsec:setup}

\begin{enumerate}
    \item  In Sections~\ref{exp:titan23} and~\ref{exp:ispd98}, we evaluate ComPart on the Titan23 and ISPD98 benchmarks against leading algorithms, including \textit{hMETIS}, \textit{KaHyPar}, \textit{SpecPart}, and \textit{SHyPar}. We employ four iterations ($\alpha=4$) of the \texttt{Fast Greedy} algorithm for community detection. To ensure fair comparison, ComPart, \textit{hMETIS}, and \textit{KaHyPar} were restricted to \textbf{the same total execution time}, reporting the best result found. Results for \textit{SpecPart} and \textit{SHyPar} ($k=2$) are cited from their respective papers. For \textit{SHyPar} with $k>2$, we executed the source code but limited the comparison to ISPD98 due to excessive runtime on Titan23 (over 20 hours for medium-sized datasets). All experiments enforce a maximum imbalance of 2\% per partition (i.e., $\epsilon = 0.02 k$).
    \item In Section \ref{exp:ablation}, we perform an ablation study on the two components of our method: community detection in uncoarsening and locally-dense decomposition, and compare the results with the performance of KaHyPar and ShYPar.
    \item In Section \ref{exp:cd}, we compare the performance of different community detection strategies applied during the uncoarsening phase, and also compare our results with KaHyPar and ShYPar.
\end{enumerate}

Our implementation is based on \texttt{KaHyPar}, using the configuration file \emph{cut\_kKaHyPar\_sea20.ini}. All experiments were executed on a single core of an \texttt{Intel(R) Xeon(R) Gold 6438Y+ CPU @ 2.00GHz} system running \texttt{Ubuntu 22.04.5 LTS} with \texttt{512GB RAM}. The implementation was compiled from source using \texttt{g++ 11.4.0} under the \texttt{C++17} standard.

\subsection{Comprehensive performance on standard benchmarks}
\label{subsec:performance}

\subsubsection{ComPart against Baselines on Titan23 Benchmark}
\label{exp:titan23}
Table \ref{tab:Titan23} compares our method, ComPart, with the baselines on the Titan23 benchmark, indicating that our algorithm outperforms all existing algorithms on all setups. The Titan23 benchmark features many large hyperedges (in terms of the number of nodes). Overall, the best baseline is SHyPar. When $k=2$, we improve by 3.62\% over the base partitioner KaHyPar and 0.87\% over the best baseline (SHyPar). When $k=3$, we improve by 5.74\% over KaHyPar. When $k=4$, we improve by 4.50\% over KaHyPar. In general, our lead is larger for $k=3$ and $k=4$ than for $k=2$. This suggests that existing works are generally stronger in the bipartition domain ($k=2$) but weaker for $k>2$. Furthermore, our algorithm achieves a greater overall improvement on Titan23 than on ISPD98. We attribute this to our community detection strategy, which can effectively handle cases with large hyperedge sizes better than these baselines. 

\subsubsection{ComPart against Baselines on ISPD98 Benchmark}
\label{exp:ispd98}
Table \ref{tab:ISPD98} presents the comparison results of our method, ComPart, against the baselines on the ISPD98 benchmark. The hyperedge sizes in the ISPD98 benchmark are typically smaller. Overall, the best baseline is still SHyPar. Our lead is greater for $k=3$ and $k=4$ compared to $k=2$. When $k=2$, we achieve a 2.10\% lead over the base partitioner KaHyPar and a 0.17\% lead relative to SHyPar. When $k=3$, we achieve a 4.11\% lead over KaHyPar and a 2.68\% lead over SHyPar. When $k=4$, we achieve a 1.14\% lead over KaHyPar and a 2.70\% lead over SHyPar. 

\subsubsection{Ablation Study}
\label{exp:ablation}
Fig. \ref{fig:abl} exhibits the ablation study to evaluate the contributions of two key components of \texttt{ComPart}: Locally-Dense Decomposition in Initial Partitioning (I) and Community Detection in Uncoarsening (U).
We also compare our results with \texttt{KaHyPar} and \texttt{SHyPar}.
The results show that `U' is the primary contributor to the performance gain.
Specifically, the `U' component alone achieves a 3.6\% improvement over \texttt{KaHyPar}, while the `I' component alone provides a 0.7\% improvement.

\begin{figure}[!t]
  \centering
  \begin{minipage}[b]{0.48\linewidth} 
    \centering
    \includegraphics[width=\linewidth]{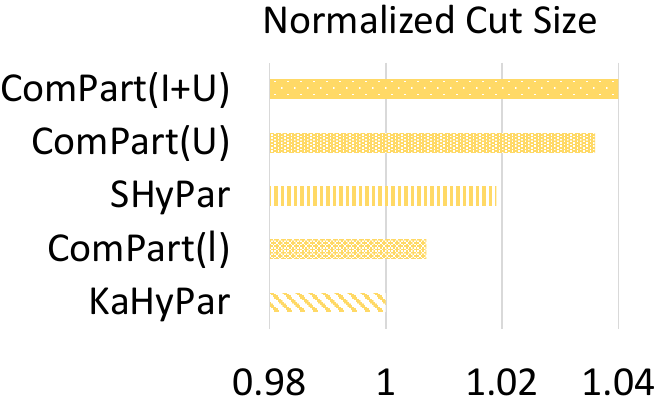}
    \vspace{-.1in}
    \caption{Ablation study.}
    \label{fig:abl}
  \end{minipage}
  \hfill 
  \begin{minipage}[b]{0.48\linewidth}
    \centering
    \includegraphics[width=\linewidth]{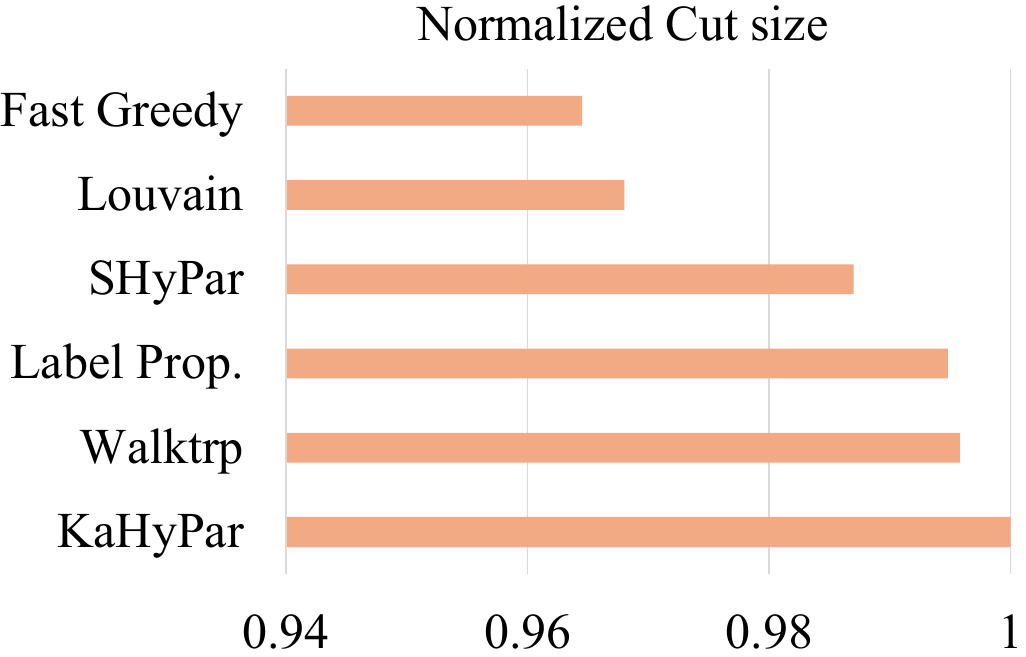}
    \vspace{-.3in}
    \caption{Different Community Detection Strategies.}
    \label{fig:cd}
  \end{minipage}
  \vspace{-.2in}
\end{figure}

\subsubsection{Comparison of Different Community Detection Strategies in the Uncoarsening Phase}
\label{exp:cd}
Figure \ref{fig:cd} presents the performance comparison of Fast Greedy, Louvain, Label Propagation, and Walktrap ($\alpha=4$ is used for each community detection strategy) within the uncoarsening phase. We include KaHyPar and SHyPar as strong baselines for reference. The results demonstrate that Fast Greedy and Louvain yield superior quality gains, whereas the contributions from Label Propagation and Walktrap are relatively modest.

\section{Conclusion}
    \label{sec:con}
    In this paper, we propose ComPart, a novel framework that integrates community detection into the uncoarsening phase to guide global solution exploration. 
Furthermore, theoretically generalize the \textit{locally-dense decomposition} technique from graphs to hypergraphs, providing a rigorous foundation for improving the initial partitioning. 
Experimental results demonstrate that our method consistently achieves state-of-the-art performance on standard benchmarks compared to existing solvers.

%\bibliographystyle{IEEEtran}
    %\bibliography{ref}

\section{Acknowledgement}
    This work is supported by Hong Kong Research Grants Council (RGC) CRF-YCRG C6003-24Y, GRF 16216825. It was partially conducted by ACCESS – AI Chip Center for Emerging Smart Systems, supported by the InnoHK initiative of the Innovation and Technology Commission of the Hong Kong Special Administrative Region Government.

\bibliographystyle{ACM-Reference-Format}
    \bibliography{ref}

@article{blondel2008fast,
  title={Fast unfolding of communities in large networks},
  author={Blondel, Vincent D and Guillaume, Jean-Loup and Lambiotte, Renaud and Lefebvre, Etienne},
  journal={Journal of statistical mechanics: theory and experiment},
  volume={2008},
  number={10},
  pages={P10008},
  year={2008},
  publisher={IOP Publishing}
}

@inproceedings{bustany2022specpart,
  title={SpecPart: A supervised spectral framework for hypergraph partitioning solution improvement},
  author={Bustany, Ismail and Kahng, Andrew B and Koutis, Ioannis and Pramanik, Bodhisatta and Wang, Zhiang},
  booktitle={Proceedings of the 41st IEEE/ACM International Conference on Computer-Aided Design},
  pages={1--9},
  year={2022}
}

@article{bustany2023k,
  title={K-SpecPart: Supervised embedding algorithms and cut overlay for improved hypergraph partitioning},
  author={Bustany, Ismail and Kahng, Andrew B and Koutis, Ioannis and Pramanik, Bodhisatta and Wang, Zhiang},
  journal={IEEE Transactions on Computer-Aided Design of Integrated Circuits and Systems},
  volume={43},
  number={4},
  pages={1232--1245},
  year={2023},
  publisher={IEEE}
}

@article{gottesburen2019evaluation,
  title={Evaluation of a flow-based hypergraph bipartitioning algorithm},
  author={Gottesb{\"u}ren, Lars and Hamann, Michael and Wagner, Dorothea},
  journal={arXiv preprint arXiv:1907.02053},
  year={2019}
}

@inproceedings{aghdaei2022hyperef,
  title={HyperEF: Spectral hypergraph coarsening by effective-resistance clustering},
  author={Aghdaei, Ali and Feng, Zhuo},
  booktitle={Proceedings of the 41st IEEE/ACM International Conference on Computer-Aided Design},
  pages={1--9},
  year={2022}
}

@inproceedings{liang2024medpart,
  title={Medpart: A multi-level evolutionary differentiable hypergraph partitioner},
  author={Liang, Rongjian and Agnesina, Anthony and Ren, Haoxing},
  booktitle={Proceedings of the 2024 International Symposium on Physical Design},
  pages={3--11},
  year={2024}
}

@misc{ccatalyurek2011patoh,
  title={PaToH (Partitioning Tool for Hypergraphs).},
  author={{\c{C}}ataly{\"u}rek, {\"U}mit V and Aykanat, Cevdet},
  year={2011}
}

@inproceedings{karypis1997multilevel,
  title={Multilevel hypergraph partitioning: Application in VLSI domain},
  author={Karypis, George and Aggarwal, Rajat and Kumar, Vipin and Shekhar, Shashi},
  booktitle={Proceedings of the 34th annual Design Automation Conference},
  pages={526--529},
  year={1997}
}

@article{schlag2023high,
  title={High-quality hypergraph partitioning},
  author={Schlag, Sebastian and Heuer, Tobias and Gottesb{\"u}ren, Lars and Akhremtsev, Yaroslav and Schulz, Christian and Sanders, Peter},
  journal={ACM Journal of Experimental Algorithmics},
  volume={27},
  pages={1--39},
  year={2023},
  publisher={ACM New York, NY}
}

@incollection{fiduccia1988linear,
  title={A linear-time heuristic for improving network partitions},
  author={Fiduccia, Charles M and Mattheyses, Robert M},
  booktitle={Papers on Twenty-five years of electronic design automation},
  pages={241--247},
  year={1988}
}

@article{kernighan1970efficient,
  title={An efficient heuristic procedure for partitioning graphs},
  author={Kernighan, Brian W and Lin, Shen},
  journal={The Bell system technical journal},
  volume={49},
  number={2},
  pages={291--307},
  year={1970},
  publisher={Nokia Bell Labs}
}

@article{sajadinia2025shypar,
  title={SHyPar: A Spectral Coarsening Approach to Hypergraph Partitioning},
  author={Sajadinia, Hamed and Aghdaei, Ali and Feng, Zhuo},
  journal={IEEE Transactions on Computer-Aided Design of Integrated Circuits and Systems},
  year={2025},
  publisher={IEEE}
}

@phdthesis{heuer2015engineering,
  title={Engineering initial partitioning algorithms for direct k-way hypergraph partitioning},
  author={Heuer, Tobias},
  year={2015},
  school={Karlsruher Institut f{\"u}r Technologie (KIT)}
}

@article{zhu2023fast,
  title={Fast Searching The Densest Subgraph And Decomposition With Local Optimality},
  author={Zhu, Yugao and Liu, Shenghua and Feng, Wenjie and Cheng, Xueqi},
  journal={arXiv preprint arXiv:2307.15969},
  year={2023}
}

@inproceedings{chen2024hypergraph,
  title={A Hypergraph Partitioner Utilizing a Novel Graph Generative Model},
  author={Chen, Magi and Wang, Ting-Chi},
  booktitle={Proceedings of the 43rd IEEE/ACM International Conference on Computer-Aided Design},
  pages={1--9},
  year={2024}
}

@inproceedings{tong2024easypart,
  title={EasyPart: An effective and comprehensive hypergraph partitioner for FPGA-based emulation},
  author={Tong, Shengbo and Li, Haoyuan and Xu, Jiahao and Pei, Chunyan and Yu, Wenjian and Liu, Shengjun and Shen, Jian},
  booktitle={Proceedings of the 43rd IEEE/ACM International Conference on Computer-Aided Design},
  pages={1--9},
  year={2024}
}

@article{li2024mapart,
  title={MaPart: An Efficient Multi-FPGA System-Aware Hypergraph Partitioning Framework},
  author={Li, Benzheng and Bi, Shunyang and You, Hailong and Qi, Zhongdong and Guo, Guangxin and Sun, Richard and Zhang, Yuming},
  journal={IEEE Transactions on Computer-Aided Design of Integrated Circuits and Systems},
  volume={43},
  number={10},
  pages={3212--3225},
  year={2024},
  publisher={IEEE}
}

@inproceedings{andre2018memetic,
  title={Memetic multilevel hypergraph partitioning},
  author={Andre, Robin and Schlag, Sebastian and Schulz, Christian},
  booktitle={Proceedings of the Genetic and Evolutionary Computation Conference},
  pages={347--354},
  year={2018}
}

@inproceedings{acikalin2022multilevel,
  title={Multilevel memetic hypergraph partitioning with greedy recombination},
  author={Acikalin, Utku Umur and Caskurlu, Bugra},
  booktitle={Proceedings of the Genetic and Evolutionary Computation Conference Companion},
  pages={168--171},
  year={2022}
}

@inproceedings{popp2021multilevel,
  title={Multilevel Acyclic Hypergraph Partitioning},
  author={Popp, Merten and Schlag, Sebastian and Schulz, Christian and Seemaier, Daniel},
  booktitle={2021 Proceedings of the Workshop on Algorithm Engineering and Experiments (ALENEX)},
  pages={1--15},
  year={2021},
  organization={SIAM}
}

@article{raghavan2007near,
  title={Near linear time algorithm to detect community structures in large-scale networks},
  author={Raghavan, Usha Nandini and Albert, R{\'e}ka and Kumara, Soundar},
  journal={Physical Review E—Statistical, Nonlinear, and Soft Matter Physics},
  volume={76},
  number={3},
  pages={036106},
  year={2007},
  publisher={APS}
}

@inproceedings{pons2005computing,
  title={Computing communities in large networks using random walks},
  author={Pons, Pascal and Latapy, Matthieu},
  booktitle={International symposium on computer and information sciences},
  pages={284--293},
  year={2005},
  organization={Springer}
}

@article{clauset2004finding,
  title={Finding community structure in very large networks},
  author={Clauset, Aaron and Newman, Mark EJ and Moore, Cristopher},
  journal={Physical Review E—Statistical, Nonlinear, and Soft Matter Physics},
  volume={70},
  number={6},
  pages={066111},
  year={2004},
  publisher={APS}
}

@inproceedings{danisch2017large,
  title={Large scale density-friendly graph decomposition via convex programming},
  author={Danisch, Maximilien and Chan, T-H Hubert and Sozio, Mauro},
  booktitle={Proceedings of the 26th International Conference on World Wide Web},
  pages={233--242},
  year={2017}
}

@book{boyd2004convex,
  title={Convex optimization},
  author={Boyd, Stephen and Vandenberghe, Lieven},
  year={2004},
  publisher={Cambridge university press}
}

@article{wolf2008multiprocessor,
  title={Multiprocessor system-on-chip (MPSoC) technology},
  author={Wolf, Wayne and Jerraya, Ahmed Amine and Martin, Grant},
  journal={IEEE transactions on computer-aided design of integrated circuits and systems},
  volume={27},
  number={10},
  pages={1701--1713},
  year={2008},
  publisher={IEEE}
}

@inproceedings{dick1998tgff,
  title={TGFF: task graphs for free},
  author={Dick, Robert P and Rhodes, David L and Wolf, Wayne},
  booktitle={Proceedings of the Sixth International Workshop on Hardware/Software Codesign.(CODES/CASHE'98)},
  pages={97--101},
  year={1998},
  organization={IEEE}
}

@book{graham2018kronecker,
  title={Kronecker products and matrix calculus with applications},
  author={Graham, Alexander},
  year={2018},
  publisher={Courier Dover Publications}
}

\end{document}